\newtheorem{theorem}{Theorem}
\newtheorem{remark}[theorem]{Remark}
\newtheorem{lemma}[theorem]{Lemma}
\newtheorem{proposition}[theorem]{Proposition}
\title{A polynomial quantum computing algorithm for solving the dualization problem}
\author{Fernando Cuartero Gomez, Mauro Mezzini, Fernando Pelayo, \\
Jose Javier Paulet Gonzales,Hernan Indibil de la Cruz Calvo\\ Vicente Pascual  }
\date{September 2023}
\begin{document}

\maketitle

\begin{abstract}
Given two prime monotone boolean functions $f:\{0,1\}^n \to \{0,1\}$ and $g:\{0,1\}^n \to \{0,1\}$ the dualization problem consists in determining if $g$ is the dual of $f$, that is if $f(x_1, \dots, x_n)= \overline{g}(\overline{x_1},  \dots \overline{x_n})$ for all $(x_1, \dots x_n) \in \{0,1\}^n$. Associated to the dualization problem there is the corresponding decision problem: given two monotone prime boolean functions $f$ and $g$ is $g$ the dual of $f$? In this paper we present a quantum computing algorithm that solves the decision version of the dualization problem in polynomial time.
\end{abstract}

\section{Introduction}
A boolean function is monotone if  given any two boolean vectors $v=(v_1, \dots, v_n)$ and $w=(w_1, \dots, w_n)$ if $v_i \leq  w_i$ for all $i\in \{1, \dots , n\}$ we have that  $f(v) \leq f(w)$. 

The \emph{dualization} problem \cite{doi:10.1137/S009753970240639X, doi:10.1137/S0097539793250299, 10.1016/j.dam.2007.04.017, journals/jal/FredmanK96}, given a monotone boolean function $f:\{0,1\}^n \to \{0,1\}$ expressed in a prime (i.e. irredundant) disjunctive normal form (DNF), consists in finding the prime DNF of a monotone boolean function $g$ such that $f(x)=\overline {g}(\overline {x})$ for all $x \in \{0,1\}^n$. 
The decision version of the dualization problem, called \emph{dual}, is defined as follows: given two  prime monotone boolean functions $f$ and $g$ is $g$ the dual of $f$?
The dualization problem and its associated decision version, are prominent problems in  several research areas such as machine learning and data mining \cite{9efd059370054b5f98ecbe844e78cf04, doi:10.1137/S0097539700370072, 10.1007/3-540-45841-7_10, 10.1023/A:1007627028578} artificial intelligence \cite{k95, gps98, r87} and others (see \cite{doi:10.1137/S009753970240639X} and the references within ).
Borrowing the notation from  \cite{journals/jal/FredmanK96} we express the monotone boolean functions $f$ and $g$ in DNF as 
\[f= \bigvee_{I \in F} \bigwedge_{i \in I} x_i
\] and 
\[g=\bigvee_{J \in G} \bigwedge_ {j \in J} x_j
\]
where $I, J \subseteq \{1,2, \dots, n\}$ and $F$ (resp. $G$) is the set of prime implicants of $f$ (resp. $g$).
The best deterministic classical computing algorithm for solving the dual problem  has complexity $O(N^{o(\log N)})$ where $N$ is the number of prime implicants of $f$ and $g$, that is $N= |F|+|G|$ \cite{journals/jal/FredmanK96}.  Determining the complexity status of the dualization problem and its associated decision version is a prominent open problem.
Equally interesting is the self-dualization problem, that is, the problem of determining if a monotone boolean function is self-dual. It has the same complexity of the dual problem since it can be reduced to self-dualization of the function $yf \vee zg \vee yz$ where $y$ and $z$ are two additional boolean variables \cite{journals/jal/FredmanK96}.
In this paper we develop a polynomial time quantum computing algorithm for the dual (resp. self-dual) problem.

\section{Methods}
In the following the variable $x$ is interpreted sometimes as a boolean (or binary) $n$-dimensional vector and sometimes as the decimal expression of the binary vector. In particular if $x$ is the decimal value of the binary vector $(x_1, \dots, x_n)$ then the decimal value of the binary vector $(\overline x_1, \dots, \overline  x_n)$ is $\overline  x = 2^n-x-1$.
We start with the following propositions which will be  much used later in the paper. 

\begin{proposition} [\cite{journals/jal/FredmanK96}] \label{prop:intersection}
Necessary condition for two monotone boolean functions $g$ and $f$ expressed in their DNF to be mutually dual is that 
\begin{equation} \label{eq:intersect}
I \cap J \neq \emptyset  \text{ for every $I \in F$ and $J \in G$}
\end{equation}
\end{proposition}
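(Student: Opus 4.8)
The plan is to argue by contradiction using the defining equation of duality $f(x)=\overline{g}(\overline{x})$ together with a single carefully chosen Boolean vector. The argument rests on the standard dictionary between the DNF representation and support sets: writing $\mathrm{supp}(x)=\{i : x_i=1\}$, the monotone function $f=\bigvee_{I\in F}\bigwedge_{i\in I}x_i$ satisfies $f(x)=1$ if and only if some prime implicant $I\in F$ is contained in $\mathrm{supp}(x)$, and likewise for $g$.

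First I would suppose, for contradiction, that there is a pair $I\in F$ and $J\in G$ with $I\cap J=\emptyset$. I would then test the duality equation on the indicator vector of $I$, namely the vector $x$ with $x_i=1$ for $i\in I$ and $x_i=0$ otherwise, so that $\mathrm{supp}(x)=I$. Since $I$ is itself a prime implicant of $f$, the clause $\bigwedge_{i\in I}x_i$ evaluates to $1$ on this assignment, and hence $f(x)=1$.

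Next I would evaluate $g$ on the complemented vector $\overline{x}$. Because $\mathrm{supp}(\overline{x})=\{1,\dots,n\}\setminus I$, and because the disjointness assumption $I\cap J=\emptyset$ forces $J\subseteq\{1,\dots,n\}\setminus I=\mathrm{supp}(\overline{x})$, the clause $\bigwedge_{j\in J}x_j$ is satisfied by $\overline{x}$. Thus $g(\overline{x})=1$ and therefore $\overline{g}(\overline{x})=0$. This contradicts duality, which demands $f(x)=\overline{g}(\overline{x})$, since we have just shown $f(x)=1$ while $\overline{g}(\overline{x})=0$. The contradiction establishes that every such pair must intersect, which is the assertion of the proposition.

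I expect the substance of the argument to be routine; the only place requiring care is the bookkeeping that links the DNF clauses to support sets and the identity $\mathrm{supp}(\overline{x})=\{1,\dots,n\}\setminus\mathrm{supp}(x)$, consistent with the decimal relation $\overline{x}=2^n-x-1$ recorded above. The main obstacle, such as it is, lies in choosing the witness vector so that it simultaneously forces $f(x)=1$ and $\overline{g}(\overline{x})=0$, and the disjointness of $I$ and $J$ is precisely what makes both conditions hold at once.
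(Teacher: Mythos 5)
Your proposal is correct and follows exactly the paper's own argument: the same contradiction setup, the same witness vector (the indicator vector of $I$), and the same conclusion that $f(x)=1=g(\overline{x})$ violates duality. You merely spell out in more detail the support-set bookkeeping that the paper's one-line ``Clearly'' compresses.
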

\begin{proof}
If, by contradiction, there exist implicants $I \in F$ and $J \in G$ such that $I \cap J = \emptyset$, let $x= (x_1,\dots, x_n)$ such that $x_i=1$ if $i \in I$ and $x_i =0$ if $i \notin I$. Clearly $f(x)=1=g(\overline x)$ and $f$ and $g$ could not be mutually dual.
\end{proof}
By Proposition \ref{prop:intersection}, if $f$ is self-dual then every implicant of $F$ must intersect every other implicant. 

\begin{lemma} \label{balanced}
Suppose $f$ is self-dual. Then $f$ is balanced, that is, for half of $x$ values is 0 and for the other half is 1.
\end{lemma}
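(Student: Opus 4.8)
The plan is to exploit the defining relation of self-duality together with the combinatorial structure of bitwise complementation on $\{0,1\}^n$. First I would unfold the definition: since $f$ is self-dual we have $f(x)=\overline{f}(\overline{x})$ for every $x\in\{0,1\}^n$, which rewritten arithmetically says
\[
f(x)+f(\overline{x})=1 \quad\text{for all } x\in\{0,1\}^n.
\]
Thus the values of $f$ on $x$ and on its complement $\overline{x}$ are always opposite: exactly one of them equals $1$.

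Next I would observe that the complementation map $x\mapsto\overline{x}$ is an involution on $\{0,1\}^n$, since applying it twice returns $x$, and moreover that it has \emph{no} fixed points: $\overline{x}=x$ would force $x_i=\overline{x_i}$ in every coordinate $i$, which is impossible. Consequently the $2^n$ elements of $\{0,1\}^n$ partition into exactly $2^{n-1}$ disjoint unordered pairs of the form $\{x,\overline{x}\}$, each consisting of two distinct vectors.

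Finally I would combine the two facts. On each of the $2^{n-1}$ pairs, the relation above guarantees that $f$ takes the value $1$ on precisely one member and $0$ on the other. Summing over all pairs yields exactly $2^{n-1}$ inputs on which $f$ equals $1$ and $2^{n-1}$ on which it equals $0$, which is precisely the assertion that $f$ is balanced.

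I do not expect a serious obstacle in this argument; the single point that genuinely needs to be checked is that complementation is fixed-point-free, so that no $x$ coincides with its own complement and the pairing is a bona fide partition into pairs of distinct elements. Once that observation is secured, the counting is immediate and the conclusion follows.
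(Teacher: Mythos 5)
Your proof is correct and takes essentially the same approach as the paper: pair each $x$ with its complement $\overline{x}$, use self-duality to conclude $f(x)+f(\overline{x})=1$ on each pair, and count the $2^{n-1}$ pairs. The only cosmetic difference is that the paper realizes the pairing arithmetically---writing $\overline{x}=2^n-x-1$ and splitting the sum at $2^{n-1}$, so that the complement map carries the lower half of the range onto the upper half---which makes the fixed-point-free observation you single out automatic rather than a separate step.
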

\begin{proof}
Let $0 \leq x <2^n$, then $\overline{x}= 2^n-x-1$. Furthermore since $f$ is self-dual we have that $f(x)\neq  f(\overline{x})$ for all $0 \leq x <2^n$. Therefore
\begin{align*}
\sum_{x=0}^{2^{n-1}-1}& f(x)+ \sum_{x=2^{n-1}}^{2^{n}-1} f( x ) =\\
\sum_{x=0}^{2^{n-1}-1}& f(x)+ \sum_{x=0}^{2^{n-1}-1} f( 2^n-x-1 ) =\\
\sum_{x=0}^{2^{n-1}-1}& [f(x)+  f( \overline x )] =2^{n-1}
\end{align*}
\end{proof}

\begin{lemma} \label{self-dual}
Let $f$ be a monotone boolean function expressed in its DNF which satisfies also \eqref{eq:intersect}. Then $f$ is self-dual if and only if $\sum_{x=0}^{2^{n}-1} f(x)= 2^{n-1}$
\end{lemma}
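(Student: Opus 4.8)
The plan is to prove the two implications separately, leaning on the two earlier results. For the forward direction I would simply invoke Lemma \ref{balanced}: if $f$ is self-dual then it is balanced, so exactly half of the $2^n$ inputs evaluate to $1$, giving $\sum_{x=0}^{2^n-1} f(x) = 2^{n-1}$ immediately. The intersection hypothesis is not needed in this direction.

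The substance lies in the converse, where I would exploit the intersection condition \eqref{eq:intersect} applied with $F=G$, since for self-duality $f=g$. The key fact I would first establish is that \eqref{eq:intersect} forces $f(x)+f(\overline{x}) \leq 1$ for every $x$. Indeed, suppose $f(x)=f(\overline{x})=1$. Then some implicant $I \in F$ satisfies $I \subseteq \{i : x_i = 1\}$ and some implicant $J \in F$ satisfies $J \subseteq \{j : \overline{x}_j = 1\} = \{j : x_j = 0\}$; hence $I \cap J = \emptyset$, contradicting \eqref{eq:intersect}. So at most one of $f(x), f(\overline{x})$ can equal $1$.

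Next I would reorganize the full sum using the same pairing as in Lemma \ref{balanced}: as $x$ runs over $0, \dots, 2^{n-1}-1$, its complement $\overline{x} = 2^n - x - 1$ runs over $2^{n-1}, \dots, 2^n-1$, so that
\[
\sum_{x=0}^{2^n-1} f(x) = \sum_{x=0}^{2^{n-1}-1} \bigl[\, f(x) + f(\overline{x}) \,\bigr] \leq 2^{n-1},
\]
the bound following from the pointwise inequality just proved, there being $2^{n-1}$ summands each at most $1$. Under the hypothesis that the left-hand side equals $2^{n-1}$, every bracketed term must attain its maximum, forcing $f(x) + f(\overline{x}) = 1$, and hence $f(x) \neq f(\overline{x})$, for all $x$. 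This is precisely the definition of $f$ being self-dual.

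The main obstacle, and really the only nontrivial step, is justifying the pointwise inequality $f(x)+f(\overline{x}) \leq 1$ from the intersection property; everything else is bookkeeping on the two halves of the input cube. It is worth being explicit that the DNF representation guarantees $f(x)=1$ if and only if some prime implicant $I$ is contained in the support $\{i : x_i = 1\}$ of $x$, which is exactly what lets the disjointness of the supports of $x$ and $\overline{x}$ be read off as disjointness of implicants.
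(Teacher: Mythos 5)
Your proof is correct and uses exactly the same ingredients as the paper's: the pointwise bound $f(x)+f(\overline{x})\leq 1$ derived from \eqref{eq:intersect}, and the pairing of $x$ with $\overline{x}=2^n-x-1$ to split the sum over the two halves of the cube. The only difference is cosmetic --- the paper wraps the sufficiency argument in a proof by contradiction, whereas you run it directly (equality in the sum forces $f(x)+f(\overline{x})=1$ for every pair, hence self-duality), which is if anything a slightly cleaner presentation of the identical idea.
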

\begin{proof}
The necessity is given by Lemma \ref{balanced}. As for the sufficiency, suppose that  $\sum_{x=0}^{2^{n}-1} f(x)= 2^{n-1}$ and suppose by contradiction that $f(x)= f(\overline x)$ for some $0 \leq x < 2^{n}$. Since \eqref{eq:intersect} holds, when $f(x)=1$ there exists an implicant $I$ such that $x_i=1$ for all $i \in I$. But then $f(\overline x)=0$ since $I$ intersects all other implicants of $F$. In other words $f(x)+f(\overline x)\leq 1$ for all $x$. Therefore we must have that $f(z)=f(\overline z)=0$ for some $0\leq z <2^n$. But since 
\[
2^{n-1}= \sum_{x=0}^{2^n-1} f(x)=\sum_{x=0}^{2^{n-1}-1} [f(x)+f(\overline x)] \leq 2^{n-1}
\]
we must have, for every $0  \leq x <2^{n-1}$, that $f(x)+f(\overline x) = 1$, and this is a contradiction.
\end{proof}
We define $w(x)$ the \emph{Hamming weight} of the  integer $0\leq x <2^n$, as the number of ones in the binary representation of $x$, or, equivalently, if $x=(x_1, \dots, x_n)$ is a binary vector, then $w(x)= \sum_{i=1}^n x_i$.

We said that the complexity of the dualization problem is measured with respect to the combined size of $f$ and $g$, that is, with respect to $N = |F|+|G|$. Furthermore as stated in \cite{journals/jal/FredmanK96}, the number $n$ of variables of the boolean functions is always less than $|F||G|$. However there exists instances of the self-dual problem in which $N=O(2^n)$ as in the following example. 

Choose $n>4$ odd and consider the following boolean function $\varphi$ whose set of implicants $F$ is the set of all subsets of $\{1,\dots, n\}$ of cardinality $\left \lceil n/2 \right \rceil$ where $\lceil a \rceil $ is the least integer greater or equal than $a$.
\begin {lemma} \label{number_of_implicants}
The function $\varphi$ is self-dual and the number of its implicants is $\binom{n}{\left \lceil n/2 \right \rceil}$. 
\end{lemma}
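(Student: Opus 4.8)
The plan is to treat the two assertions separately. The count is immediate from the definition: since $F$ is declared to be the collection of \emph{all} subsets of $\{1,\dots,n\}$ of cardinality $\lceil n/2\rceil$, the number of implicants is by definition $\binom{n}{\lceil n/2\rceil}$, so nothing beyond the statement is needed there. The substance lies in the self-duality, and for that I would first obtain a clean description of $\varphi$ in terms of Hamming weight. Because every term of the DNF is a conjunction of $\lceil n/2\rceil$ distinct variables, the disjunction evaluates to $1$ at $x$ precisely when some size-$\lceil n/2\rceil$ subset $I$ has $x_i=1$ for all $i\in I$, which happens if and only if $w(x)\ge \lceil n/2\rceil$. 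Since $n$ is odd, $\lceil n/2\rceil=(n+1)/2$, so the characterization reads $\varphi(x)=1 \iff w(x)\ge (n+1)/2$.

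Next I would record the elementary identity $w(\overline x)=n-w(x)$, which holds because complementing flips every bit. Combined with the weight characterization, this gives $\varphi(\overline x)=1 \iff n-w(x)\ge (n+1)/2 \iff w(x)\le (n-1)/2$. The decisive observation is that, $n$ being odd, the integers $(n-1)/2$ and $(n+1)/2$ are consecutive, so every admissible weight $w(x)\in\{0,1,\dots,n\}$ satisfies exactly one of $w(x)\le (n-1)/2$ and $w(x)\ge (n+1)/2$. Hence exactly one of $\varphi(x)$, $\varphi(\overline x)$ equals $1$, i.e. $\varphi(x)+\varphi(\overline x)=1$ for every $x$, which is precisely the statement that $\varphi$ is self-dual.

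Alternatively, one could route the argument through Lemma \ref{self-dual}. The function $\varphi$ is monotone by construction (a positive DNF), and any two subsets $I,J$ of size $(n+1)/2$ satisfy $|I|+|J|=n+1>n$, so by a pigeonhole count $I\cap J\neq\emptyset$, establishing \eqref{eq:intersect}. Moreover $\sum_{x=0}^{2^n-1}\varphi(x)=\sum_{k=(n+1)/2}^{n}\binom{n}{k}$ equals $2^{n-1}$ by the symmetry $\binom{n}{k}=\binom{n}{n-k}$ together with the absence of a central term for odd $n$. Applying Lemma \ref{self-dual} then yields self-duality, giving a second route that showcases the criterion just proved.

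I do not expect a genuine obstacle here; the only point demanding care is the parity bookkeeping. The whole argument hinges on $n$ being odd so that the thresholds $(n-1)/2$ and $(n+1)/2$ partition the possible weights with no overlap and no gap; for even $n$ a middle weight would make both $\varphi(x)$ and $\varphi(\overline x)$ agree, and self-duality would fail. A secondary technical check, relevant only to the phrasing ``its implicants,'' is that each size-$(n+1)/2$ set is actually a prime (minimal) implicant: deleting any element leaves a conjunction of $(n-1)/2$ variables whose satisfying assignment has weight $(n-1)/2$ and therefore does not force $\varphi=1$, so no proper subset is an implicant and $F$ is indeed irredundant.
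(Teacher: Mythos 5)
Your proposal is correct, and your primary argument takes a genuinely different route from the paper's. The paper proves self-duality by invoking Lemma \ref{self-dual}: it verifies the intersection condition \eqref{eq:intersect} via the pigeonhole bound $2\left\lceil n/2 \right\rceil > n$, establishes the weight characterization $\varphi(x)=1$ if and only if $w(x)\geq \left\lceil n/2 \right\rceil$, counts $|\{x: w(x)\geq \left\lceil n/2 \right\rceil\}| = 2^{n-1}$, and then applies the lemma. Your main argument instead verifies self-duality pointwise: using $w(\overline{x})=n-w(x)$ and the fact that for odd $n$ the thresholds $(n-1)/2$ and $(n+1)/2$ partition the possible weights with no overlap and no gap, you conclude $\varphi(x)+\varphi(\overline{x})=1$ for every $x$, which is precisely the definition of self-duality. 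This is more elementary and self-contained: it needs neither \eqref{eq:intersect}, nor the binomial identity $\sum_{k=(n+1)/2}^{n}\binom{n}{k}=2^{n-1}$, nor Lemma \ref{self-dual}, and it makes the role of the parity of $n$ completely transparent. The paper's route --- which your alternative second argument reproduces essentially verbatim --- has the merit of exercising the criterion of Lemma \ref{self-dual} that the paper has just established, exhibiting $\varphi$ as a natural example for it. Your closing observation that each size-$\left\lceil n/2 \right\rceil$ subset is a prime (minimal) implicant, because deleting any element yields a conjunction falsified at a weight-$(n-1)/2$ point, is a worthwhile check that the paper omits, given that the dualization problem is stated for prime DNFs.
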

\begin{proof}
Trivially $|F|= \binom{n}{\left \lceil n/2 \right \rceil}$. If there exist two implicants $I$ and $J$ such that $I \cap J= \emptyset$ then $|I \cup J|= |I|+|J|= 2 \left \lceil n/2 \right \rceil>n$ a contradiction to the fact that the number of variables is $n$. So we have that \eqref{eq:intersect} holds.

For every $x$ such that $w(x) < \left \lceil n/2 \right \rceil$ we have that $\varphi(x)=0$ since every implicant $I$ of $\varphi(x)$ has cardinality $|I|= \left \lceil n/2 \right \rceil$. On the other hand for every $x$ such that $w(x)\geq \left \lceil n/2 \right \rceil$ then $\varphi(x)=1$ since if we consider $x$ as a binary vector we will always find an implicant $I$ such that $x_i=1$ for all $i \in I$. Now it is immediate to check that $|\{x: w(x)\geq \left \lceil n/2\right \rceil|= 2^{n-1}$. By Lemma \ref{self-dual}, $\varphi$ is self-dual.
\end{proof}

\subsection{The quantum computing algorithm}

Given two boolean function $f$ and $g$ we build the function $h(x)= f(x) \oplus \overline{g}(\overline{x})$ where $\oplus$ is the sum modulo two.

Note that $h$ can be obtained from $f$  and $g$ by using a linear number of logic gates. If $f(x)=\overline{g}(\overline{x})$ for all $x$ then $h(x)=0$ for all $x$.
 We prepare a black box $U_h$ which performs the transformation $|x\rangle|y\rangle \to |x\rangle|y \oplus h(x)\rangle$, for $0 \leq x <2^{n }$. We use the blackbox in the Deutsch-Joshua algorithm. We have that the measurements of first $n$ qubits will be 
\[\displaystyle \dfrac{1}{2^n} \sum_{z=0}^{2^n-1}\sum_{x=0}^{2^n-1} (-1)^{x\cdot z+h(x)}|z \rangle
\]
and  the probability of measuring for $|z \rangle=|0 \rangle$ is, when $h(x)=0$ for all $x$, equal to 1 since 
\[\displaystyle \dfrac{1}{2^n} \sum_{x=0}^{2^n-1} (-1)^{h(x)}|0 \rangle=|0 \rangle
\]
so we have the following remark
\begin{remark}
Let $f$ and $g$ two monotone prime boolean functions and $h=f\oplus g$. If we measure at the end of the Deutsch-Joshua algorithm with blackbox function $h$,  a value $|x \rangle \neq |0 \rangle$ then $f$ is not the dual of $g$.
\end{remark}

From Remark 1 Lemma \ref{balanced} and Lemma \ref{self-dual} we can devise a simple quantum algorithm for checking if a function $f$ is self-dual as follows.\\\\
\textbf{Algorithm Quantum Dual}\\ 
\textbf{Input:} A black box $U_f$ which performs the transformation $|x\rangle|y\rangle \to |x\rangle|y \oplus f(x)\rangle$, for $0 \leq x <2^{n }$ and $f(x) \in \{0, 1\}$\\\\
\textbf{Output:} \emph{True} if $f$ is self-dual and \emph{False} otherwise.\\\\
\textbf{Procedure:} 
\begin{enumerate}
\item  Use the Deutsch-Joshua algorithm to check if $f$ is balanced. If the output of the Deutsch-Joshua algorithm is equal to $|0\rangle$ then output \emph{False} and exit.
\item Let $h(x)=f(x) \oplus \overline f (\overline x)$. Use the Deutsch-Joshua algorithm to check if $h$ is constant. If the output of the Deutsch-Joshua algorithm is not equal to $|0\rangle$ then output \emph{False} and exit.
\item Use the Quantum Counting algorithm to count the number of $x$ such that $f(x)=1$ using $t= \left \lceil n/2 \right \rceil$ qubits to measure the phase angle. If the measurement at the end of the algorithm is $|y\rangle$ and if $y \neq 2^{t-2}$ then output \emph{False} and exit.
\item Use the Grover algorithm to find an $x$ such that $f(x)=f(\overline x)$. If such $x$ is found then output \emph{False} and exit.
\item Output \emph{True}
\end{enumerate}

The complexity of the algorithm is dominated by the complexity of the Quantum Counting and of the Grover algorithms.  Both algorithms achieve a complexity on the number of quantum gates which is $O(2^{n/2})$ while the best deterministic classical computing algorithm has time complexity of $O(N^{o(\log N)})$ \cite{journals/jal/FredmanK96}. However, we saw in Lemma \ref{number_of_implicants} that a self-dual function can have a number of implicants in its DNF equal to $\binom{n}{\left \lceil n/2 \right \rceil}$ which is asymptotic to $O(2^n)$. Therefore we have that $N \leq 2^n$ from which we obtain that the complexity of our quantum algorithm for the dualization problem  is $O(\sqrt{N})$. 

\bibliographystyle{elsarticle-num}
\bibliography{notes_on_HSHx}

\end{document}